\providecommand{\U}[1]{\protect\rule{.1in}{.1in}}
\numberwithin{equation}{section}
\theoremstyle{plain}
\newtheorem{theorem}{Theorem}[section]
\newtheorem{lemma}[theorem]{Lemma}
\newtheorem{proposition}[theorem]{Proposition}
\theoremstyle{definition}
\newenvironment{proof**}[1]{\paragraph{\it Proof of #1.}}{\hfill $\square$\\}
\newcommand{\la}{\lambda}
\newcommand{\map}{\mapsto}
\newcommand{\pr}{\partial}
\newcommand{\me}{\geqslant}
\newcommand{\bra}[1]{\! \left (#1\right )}
\DeclareMathOperator{\diag}{diag}
\newcommand{\eqq}[1]{\begin{align*} #1 \end{align*}}
\newcommand{\eq}[1]{\begin{align} #1 \end{align}}
\setlist{nolistsep}
\begin{document}

\title{\vspace{0mm}\textsc{Integrable quantum St\"ackel systems}}
\author{Maciej B\l aszak$^{\dag}$, Ziemowit Doma\'nski$^{\dag}$, Artur
Sergyeyev$^{\ddag}$, B\l a\.zej M. Szablikowski$^{\dag}$\\{\small \dag Faculty of Physics, Adam Mickiewicz University}\\{\small Umultowska 85, 61-614 Pozna\'n, Poland}\\{\small e-mails: \texttt{blaszakm@amu.edu.pl}, \texttt{ziemowit@amu.edu.pl},
\texttt{bszablik@amu.edu.pl}}\\[3mm] {\small \ddag Mathematical Institute, Silesian University in Opava}\\{\small Na Rybn\'\i{}\v{c}ku 1, 746\,01 Opava, Czech Republic}\\{\small e-mail: \texttt{Artur.Sergyeyev@math.slu.cz}}}
\date{}
\maketitle

\begin{abstract}
The St\"ackel separability of a Hamiltonian system is well known to ensure
existence of a complete set of Poisson commuting integrals of motion quadratic
in the momenta. We consider a class of St\"ackel separable systems where the
entries of the St\"ackel matrix are monomials in the separation variables. We
show that the only systems in this class for which the integrals of motion
arising from the St\"ackel construction keep commuting after quantization are,
up to natural equivalence transformations, the so-called Benenti systems.
Moreover, it turns out that the latter are the only quantum separable systems
in the class under study.\looseness=-1

\end{abstract}

{\small Corresponding author: Maciej B\l aszak, tel. +48 618295057, e-mail: \texttt{blaszakm@amu.edu.pl}}

\section{Introduction}

The separation of variables is well known to be one of the most powerful
methods for integration of equations of motion for dynamical systems, see
e.g.\ \cite{M, MK, Sklyanin, B} and references therein. In classical mechanics
the separation of variables occurs in the Hamilton--Jacobi equation and in
quantum mechanics in the Schr\"odinger equation, with the former being a
classical limit of the latter. The necessary and sufficient conditions for
orthogonal separation of variables for natural Hamiltonians in absence of
magnetic field in classical mechanics were established by St\"ackel \cite{St}.\looseness=-1

While the separation of variables in the Schr\"odinger equation implies one in
the Hamilton--Jacobi equation, the converse is not true. This is, however, a
fairly subtle issue because there is a number of not entirely equivalent
definitions of separation of variables, see the discussion in \cite{Ben3,
Ben4} and also e.g.\ \cite{shap, bg, ZZ} and references therein. In the
present paper we stick to the definition used in \cite{Ben3, Ben4}. With this
definition in mind,
%in the most important case of the {\em orthogonal} separation of variables
the separability conditions for the Schr\"odinger equation, in addition to
those for the Hamilton--Jacobi equation, include the so-called Robertson
condition, see \cite{Ben3, Ben4} and Theorem~\ref{sept} below for details.

It is well known that separation of variables is intimately related to
complete integrability (and also to superintegrability, see e.g.\ \cite{wint,
wint2, mkk} and references therein) and to existence of symmetries, see
e.g.\ \cite{bg, mi, ka} and references therein. As an aside, note that another
important hallmark of complete integrability, the bihamiltonian property,
under certain assumptions implies separability \cite{pe}, and conversely, any
separable Hamiltonian system under certain conditions admits a bihamiltonian
representation, possibly in the extended phase space \cite{bla1}.\looseness=-1

The separation of variables implies complete integrability but there are
examples of completely integrable (or even superintegrable) systems for which
the separation variables are not known or the separation of variables does not
occur at all, see e.g.\ \cite{Ber} and references therein; cf.\ e.g.\ also
\cite{RW} and references therein regarding the search for separation coordinates.

The separation of variables in the Schr\"odinger equation implies
commutativity of the quantized versions of the integrals of motion resulting
from the separation relations, see Theorem~\ref{pri} below and \cite{Ben4} for
details. It turns out, however, that these quantized integrals of motion
commute even under somewhat milder conditions than the separation of variables
in the Schr\"odinger equation. Namely, it suffices to require the separation
of variables in the Hamilton--Jacobi equation and the so-called pre-Robertson
condition (instead of the stronger Robertson condition, cf.\ the discussion
above), see \cite{Ben4} and Theorem~\ref{pri} below for details.
%Moreover, there indeed exist
%examples of quantum (super)integrable systems that do not admit separation
%of variables \cite{Ber}.

These facts beget a deeper analysis of the interplay of separability and
quantum complete integrability, and the present paper is just the first step
in this direction. We study here a class of integrable systems for which the
entries of the St\"ackel matrix are monomials in the separation variables and
show that for this class the Robertson and pre-Robertson conditions are
equivalent, and the only subclass for which these are satisfied is, up to a
natural equivalence, that of the so-called Benenti systems, see Section 4
below for details. Note that this subclass contains a lot of systems naturally
arising in physics and mechanics. Our result is general and contains the proof
of quantum integrability and quantum separability, in the sense of \cite{Ben3,
Ben4}, for the whole class of Benenti systems. Up to now, only particular
members of that class was considered by various authors \cite{ZL, Toth, HW,
Du}.

In what follows we restrict ourselves to quantization of a special class of
the St\"ackel systems for which all separation coordinates are essential,
i.e., the St\"ackel matrix depends on all of them in a nontrivial fashion.
Thus, we deal with the so-called strictly orthogonal separation. As a
consequence of this, at the quantum level we have the associated free quantum
separability, where the eigenfunctions of the Hamiltonian operator take the
form of a product where each term depends on a single separation coordinate.
The case of non-orthogonal separability, also known as reduced or constrained
separability \cite{Ben3}, when the so-called ignorable coordinates come up,
which in turn leads to $R$-separability instead of free separability at the
quantum level, requires a separate study.

The paper is organized as follows. In Section~2 we recall some basic results
on the separation of variables in the Hamilton--Jacobi equation and the
Schr\"odinger equation. In Section~3 we introduce the class of integrable
systems under study and establish the equivalence of the Robertson and
pre-Robertson conditions for this class. In Section~4 we show that the only
systems within the class under study that satisfy the pre-Robertson condition
are the so-called systems of Benenti type. In Section~5 we illustrate our
results by a simple example. Finally, in Section~6 we make some comments on
the existing ambiguities in the quantization procedure and their influence on
quantum integrability and quantum separability.

\section{Preliminaries}

Consider a quadratic in momenta Hamiltonian function in the natural form
\begin{align}
\label{nh}H = \frac{1}{2}g^{ij}(q)p_{i}p_{j} + V(q),\qquad p\in T^{*}%
_{q}\mathcal{Q},
\end{align}
on the phase space $T^{*}\mathcal{Q}$, the cotangent space of an
$n$-dimensional (pseudo-)Riemannian manifold $(\mathcal{Q},g)$, where $V$ is a
function on $\mathcal{Q}$ (the potential), and the sum over repeated indices
from $1$ to $n$ is understood unless otherwise explicitly stated. The related
Hamilton--Jacobi equation has the form
\begin{align}
\label{hj}\frac{1}{2}g^{ij} \partial_{i}S \partial_{j} S + V = E,
\end{align}
where $E$ is a constant parameter (energy) and $\partial_{i}=\partial/\partial
q^{i}$. The corresponding stationary Schr\"odinger equation is\looseness=-1
\begin{align}
\label{se}\hat{H}\psi:= -\frac{1}{2}\hslash^{2} g^{ij}\nabla_{i}\nabla_{j}
\psi+ V\psi=E\psi,
\end{align}
where $\nabla_{i}$ is the covariant derivative for the Levi-Civita connection
of $g$ and $\hslash$ is a parameter (the Planck constant).

\begin{theorem}
\textrm{(\cite{Ben3})}\label{sept} The Schr\"odinger equation is freely
separable in a coordinate system if the following conditions hold: the
coordinates in question are orthogonal, the corresponding Hamilton--Jacobi
equation is separable, and in these coordinates the Robertson condition is
satisfied:
\begin{align}
\label{rc}R_{ij} = \frac{3}{2}\partial_{i}\Gamma_{j} = 0,\qquad i\neq j;
\end{align}
here $R_{ij}$ is the Ricci tensor and the contracted Christoffel symbols are
defined by
\begin{align*}
\Gamma_{i} := g_{is}g^{jk}\Gamma^{s}_{jk} = \frac{1}{2}g^{jk}\! \left(
\partial_{j}g_{ki}+\partial_{k}g_{ij}-\partial_{i}g_{jk}\right) .
\end{align*}

\end{theorem}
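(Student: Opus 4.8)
The plan is to transplant the classical Stäckel separation of the Hamilton--Jacobi equation onto the Schrödinger equation and to see that the only genuinely new ingredient is control of the first--order part of the Laplace--Beltrami operator. First I would write $\hat H$ explicitly in the given orthogonal coordinates. Since $g^{ij}$ is diagonal, $g^{ij}\nabla_i\nabla_j=\tfrac{1}{\sqrt g}\partial_i(\sqrt g\,g^{ij}\partial_j)$ collapses to
\begin{equation*}
g^{ij}\nabla_i\nabla_j = \sum_i g^{ii}\,\partial_i^2 + \sum_i b^i\,\partial_i, \qquad b^i = \frac{1}{\sqrt g}\,\partial_i\!\left(\sqrt g\, g^{ii}\right),\quad g=\det(g_{ij}).
\end{equation*}
Using metric compatibility $\nabla_j g^{ij}=0$ together with $\Gamma^j_{jk}=\partial_k\log\sqrt g$, I would show $g^{jk}\Gamma^i_{jk}=-b^i$, so the contracted Christoffel symbol obeys $\Gamma_i=g_{ii}\,g^{jk}\Gamma^i_{jk}=-b^i/g^{ii}$ (no sum on $i$). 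This is the key structural observation: the first--order coefficients of $\hat H$ are exactly the $\Gamma_i$, giving
\begin{equation*}
\hat H = -\tfrac12\hbar^2\sum_i g^{ii}\bigl(\partial_i^2 - \Gamma_i\,\partial_i\bigr) + V.
\end{equation*}

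Next I would substitute the free product ansatz $\psi=\prod_i\psi_i(q^i)$, so that $\partial_i\psi/\psi=\psi_i'/\psi_i$ and $\partial_i^2\psi/\psi=\psi_i''/\psi_i$, and invoke the hypotheses. Separability of the Hamilton--Jacobi equation in orthogonal coordinates means, by Stäckel's theorem, that there is a matrix $\varphi=(\varphi_{ik})$ with $\varphi_{ik}=\varphi_{ik}(q^i)$ and single--variable functions $w_i=w_i(q^i)$ such that $g^{ii}=(\varphi^{-1})_{1i}$, $V=\sum_i g^{ii}w_i$, and $\sum_i(\varphi^{-1})_{1i}\varphi_{ik}=\delta_{1k}$. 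I would then verify that $\psi=\prod_i\psi_i$ solves $\hat H\psi=E\psi$ as soon as each factor satisfies the ordinary differential equation
\begin{equation*}
-\tfrac12\hbar^2\bigl(\psi_i'' - \Gamma_i\,\psi_i'\bigr) + w_i\,\psi_i = \Bigl(\sum_k \varphi_{ik}\,c_k\Bigr)\psi_i, \qquad c_1=E,
\end{equation*}
with $c_2,\dots,c_n$ playing the role of separation constants: dividing by $\psi_i$, multiplying by $g^{ii}=(\varphi^{-1})_{1i}$ and summing over $i$ reproduces $\hat H\psi=E\psi$, because $\sum_i(\varphi^{-1})_{1i}\varphi_{ik}=\delta_{1k}$ collapses the right--hand side to $c_1=E$ while $\sum_i g^{ii}w_i=V$.

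The decisive point is that this is a genuine single--variable separation only if every coefficient of the $i$-th equation depends on $q^i$ alone. The data $w_i$ and $\varphi_{ik}$ are so by construction, and the sole remaining coefficient is $\Gamma_i$, the multiplier of $\psi_i'$. Hence free separability holds precisely when $\Gamma_i=\Gamma_i(q^i)$ for every $i$, that is, when $\partial_j\Gamma_i=0$ for $i\neq j$. To close the argument I would compute the off--diagonal Ricci components and, using the Stäckel form of $g^{ii}$, establish the identity $R_{ij}=\tfrac32\partial_i\Gamma_j$ for $i\neq j$; this identifies $\partial_i\Gamma_j=0$ with the Robertson condition~\eqref{rc}, so that the latter indeed yields the required product eigenfunctions.

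I expect the Ricci identity $R_{ij}=\tfrac32\partial_i\Gamma_j$ ($i\neq j$) to be the main obstacle. It is genuinely curvature--theoretic and not merely formal: for an arbitrary diagonal metric the off--diagonal Ricci already vanishes (trivially in dimension two), so the identity can hold only once the Stäckel constraints on the $g^{ii}$ are imposed. Establishing it thus requires expanding the off--diagonal curvature in orthogonal coordinates and invoking those constraints to reduce everything to the single term $\tfrac32\partial_i\Gamma_j$. The remaining steps are a routine adaptation of the classical Stäckel bookkeeping; the one new input driving the whole proof is the identification of the first--order part of $\hat H$ with the contracted Christoffel symbols $\Gamma_i$.
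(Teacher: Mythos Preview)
The paper does not prove this theorem. It is quoted in Section~2 as a preliminary result from \cite{Ben3} (Benenti--Chanu--Rastelli), with no argument supplied; the paper simply records it, together with the formula \eqref{go} for $\Gamma_i$, and then moves on. So there is no ``paper's own proof'' to compare your proposal against.

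For what it is worth, your outline is essentially the standard argument one finds in \cite{Ben3}: express $\hat H$ in orthogonal coordinates, identify the first-order coefficient with $-g^{ii}\Gamma_i$, feed in the St\"ackel structure coming from Hamilton--Jacobi separability, and observe that the product ansatz separates iff each $\Gamma_i$ depends on $q^i$ alone, i.e.\ $\partial_j\Gamma_i=0$ for $i\neq j$. One caution about your last paragraph: the sentence ``for an arbitrary diagonal metric the off-diagonal Ricci already vanishes'' is, as written, false in dimension $n\ge 3$; it is only in dimension two that $R_{ij}=\tfrac{R}{2}g_{ij}$ forces the off-diagonal components to zero. Your subsequent point---that the identity $R_{ij}=\tfrac32\partial_i\Gamma_j$ genuinely requires the St\"ackel constraints and is not a property of arbitrary orthogonal metrics---is correct, but the phrasing invites misreading. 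In the present paper that identity is simply taken as known (it is invoked again just after \eqref{prc2}); its derivation lives in \cite{Ben3}.
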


In the orthogonal coordinates we have
\begin{align}
\label{go}\Gamma_{i} = \frac{1}{2}\partial_{i} \log\left| \det G\right|  -
g_{ij}\partial_{k} g^{jk},\qquad G^{jk}\equiv g^{jk}.
\end{align}

A \textit{Killing--St\"ackel algebra} is \cite{Ben2} an $n$-dimensional linear
space spanned by contravariant Killing tensors $K_{r}$ of valence two which
can be simultaneously diagonalized in orthogonal coordinates. Such an algebra
naturally contains the contravariant metric $K_{1}\equiv G$. With a
Killing--St\"ackel algebra we can associate a system of $n$ Hamiltonians
\begin{align}
\label{hk}H_{r} = \frac{1}{2}K_{r}^{ij}p_{i}p_{j} + V_{r},\qquad r=1,\ldots,n,
\end{align}
where $V_{r}$ are functions on $\mathcal{Q}$ and $V_{1}\equiv V$. Then $H_{1}$
is nothing but the original Hamiltonian \eqref{nh}.

\begin{theorem}
\textrm{(\cite{Ben2})} The Hamilton--Jacobi equation \eqref{hj} associated
with a natural Hamiltonian \eqref{nh} is separable in orthogonal coordinates,
i.e., integrable by separation of variables if and only if there exists a
Killing--St\"ackel algebra such that the equation
\begin{align*}
d(\bar{K}_{r}dV) = 0
\end{align*}
holds for all $r=1,\ldots,n$, where $\bar{K}_{r} = gK_{r}$ (i.e., $(\bar
{K}_{r})^{i}_{j} = g_{js}(K_{r})^{si}$). Then there exist the functions
$V_{r}$ on $\mathcal{Q}$ satisfying
\begin{align*}
dV_{r} = \bar{K}_{r}dV,\qquad r=1,\ldots,n,
\end{align*}
such that the associated Hamiltonians \eqref{hk} Poisson commute, $\left\{
H_{i},H_{j}\right\} =0$.
\end{theorem}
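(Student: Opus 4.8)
The plan is to split the statement into a purely geodesic (kinetic) part and a potential part, treating the equivalence first and the ensuing Poisson commutativity afterwards. First I would recall the classical St\"ackel--Eisenhart characterization of orthogonal separability of the \emph{geodesic} Hamilton--Jacobi equation: the geodesic flow separates in orthogonal coordinates precisely when a Killing--St\"ackel algebra exists, its defining Killing tensors $K_r$ being simultaneously diagonalized in the separation coordinates. In those coordinates $g$ and every $K_r$ are diagonal, the St\"ackel matrix is reconstructed from their diagonal entries, and the purely kinetic Hamiltonians $T_r=\frac12 K_r^{ij}p_ip_j$ Poisson commute because $\{T_i,T_j\}$ is the momentum-cubic function whose coefficients are the components of the Schouten bracket $[K_i,K_j]$, which vanishes for the Killing tensors of such an algebra.

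For the potential I would note that the hypothesis $d(\bar K_r\,dV)=0$ says precisely that each $1$-form $\bar K_r\,dV$ is closed, so by the Poincar\'e lemma there exist (locally) functions $V_r$ with $dV_r=\bar K_r\,dV$; since $K_1=G$ gives $\bar K_1=\mathrm{Id}$, one has $V_1=V$. The remaining core task is to show that these closedness conditions are equivalent to $V$ being an admissible St\"ackel potential, i.e. $V=\sum_i g^{ii}U_i(q^i)$ with each $U_i$ a function of $q^i$ alone, which is exactly what makes the full Hamilton--Jacobi equation separable. In the diagonalizing coordinates $(\bar K_r\,dV)_a=g_{aa}c_r^a\partial_a V$ (no sum, with $c_r^a$ the $a$-th eigenvalue of $K_r$), and the closedness relations together with the functional independence of the eigenvalues force the mixed derivatives of $V$ to assemble into single-variable pieces.

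Finally, granting this structure, I would obtain Poisson commutativity by expanding
\begin{align*}
\{H_i,H_j\}=\{T_i,T_j\}+\{T_i,V_j\}+\{V_i,T_j\}+\{V_i,V_j\}.
\end{align*}
Here $\{T_i,T_j\}=0$ by geodesic integrability and $\{V_i,V_j\}=0$ since the $V_r$ depend on position only, so everything reduces to the momentum-linear cross terms. A direct computation gives $\{T_i,V_j\}=-p_aK_i^{ab}\partial_bV_j$ and $\{V_i,T_j\}=p_aK_j^{ab}\partial_bV_i$; substituting $\partial_bV_r=g_{bs}K_r^{sc}\partial_cV$ (from $dV_r=\bar K_r\,dV$) and using the simultaneous diagonality of $g$ and the $K_r$, both reduce in the separation coordinates to $\pm\,p_a g_{aa}c_i^ac_j^a\partial_aV$ with opposite signs, whence they cancel and $\{H_i,H_j\}=0$.

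The main obstacle I anticipate is the potential equivalence of the second paragraph. The forward implication, that a St\"ackel potential satisfies $d(\bar K_r\,dV)=0$, is a routine computation; the converse, that closedness of all the forms $\bar K_r\,dV$ suffices to recover the single-variable decomposition $V=\sum_i g^{ii}U_i(q^i)$, must exploit the pointwise simplicity and functional independence of the eigenvalues of the Killing--St\"ackel algebra. It is here, in the geometry of the St\"ackel web rather than in the commutativity computation, that the genuine work resides.
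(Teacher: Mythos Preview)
The paper does not prove this theorem; it is stated with the citation \cite{Ben2} and no proof is given. This is a background result quoted from Benenti's work, so there is no ``paper's own proof'' to compare your proposal against.

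That said, your outline is a reasonable sketch of how the Benenti--Eisenhart argument actually goes: reduce to the geodesic case via the St\"ackel--Eisenhart characterization of orthogonal separability, use the Poincar\'e lemma to produce the $V_r$, and verify Poisson commutativity by cancelling the cross terms in diagonal coordinates. Your identification of the genuine difficulty is also correct: the nontrivial step is the converse for the potential, i.e.\ that closedness of all $\bar K_r\,dV$ forces $V$ to be a St\"ackel potential. This is where one really needs the structure of the Killing--St\"ackel algebra (in particular, the invertibility of the St\"ackel matrix and the Eisenhart relations among the eigenvalues), and your proposal, while pointing at the right place, does not yet supply that argument. If you want a self-contained proof you will have to fill this in; otherwise, citing \cite{Ben2} as the paper does is the appropriate course.
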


The separation relations \cite{Sklyanin} associated with separable
Hamilton--Jacobi equations generated by the Hamiltonians of the form
\eqref{hk} are
\begin{align}
\label{gsr}\sum_{r=1}^{n} S_{i}^{r}(\lambda_{i}) H_{r} = f_{i}(\lambda_{i}%
)\mu_{i}^{2} + \sigma_{i}(\lambda_{i}),\qquad i=1,\ldots,n,
\end{align}
where $\lambda$ and $\mu$ are orthogonal coordinates on $\mathcal{Q}$ and the
associated momenta, respectively and $S(\lambda)$ is called a St\"ackel
matrix. The relations \eqref{gsr} are the (original) St\"ackel separation
relations quadratic in the momenta and the associated dynamical systems are
the related St\"ackel separable systems \cite{St, bla1}. The functions
$S_{i}^{r}$, $f_{i}$ and $\sigma_{i}$ are functions of a single argument
$\lambda_{i}$ which are uniquely determined by the Killing tensors $K_{r}$ and
the potentials $V_{r}$. On the other hand, one can start with the separation
relations \eqref{gsr} and generate separable natural Hamiltonian systems with
the Hamiltonians~\eqref{hk}.

Introduce (cf.\ e.g.\ \cite{Ben3, Ben4} and references therein) linear
second-order differential operators corresponding to the Hamiltonians
\eqref{hk}:
%of the form%
\begin{align}
\label{qhk}\hat{H}_{r} = -\frac{1}{2}\hslash^{2}\nabla_{i} K_{r}^{ij}%
\nabla_{j} + V_{r},\qquad k=1,\ldots,n.
\end{align}
Thus, $\hat{H}_{1}$ coincides with the operator $\hat{H}$ defining the
Schr\"odinger equation \eqref{se}. In general, these operators do not
necessarily commute even when they are associated to some Killing--St\"ackel algebra.

\begin{theorem}
\textrm{(\cite{Ben4})}\label{pri} Let the Hamiltonians \eqref{hk} form the
space of first integrals in involution associated with the orthogonal
separation for the Hamilton--Jacobi equation. Then the corresponding
Hamiltonian operators \eqref{qhk} commute, that is, $[\hat{H}_{i},\hat{H}%
_{j}]=0$, if and only if the pre-Robertson condition
\begin{align}
\label{prc}\partial_{i} R_{ij} - \Gamma_{i} R_{ij} = 0,\qquad i\neq
j,\quad\mbox{\rm no sum over}\, i,
\end{align}
is satisfied in any orthogonal separable coordinates.
\end{theorem}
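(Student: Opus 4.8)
The plan is to reduce the commutator $[\hat H_i,\hat H_j]$ to a purely kinetic expression and then to identify the surviving, curvature-generated term with the off-diagonal pre-Robertson quantities. I would work throughout in the orthogonal separation coordinates, in which the metric $g$ and all the Killing tensors $K_r$ are simultaneously diagonal.

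First I would split $\hat H_r=\hat K_r+V_r$, with the kinetic part $\hat K_r=-\tfrac12\hslash^2\nabla_a K_r^{ab}\nabla_b$, and expand
\[
[\hat H_i,\hat H_j]=[\hat K_i,\hat K_j]+\big([\hat K_i,V_j]-[\hat K_j,V_i]\big).
\]
A short computation gives $[\hat K_r,V]=-\hslash^2 K_r^{ab}(\partial_a V)\partial_b-\tfrac12\hslash^2\nabla_a(K_r^{ab}\partial_b V)$. The linear-in-momenta part of the assumed classical involution $\{H_i,H_j\}=0$ is exactly $K_i^{ab}\partial_a V_j=K_j^{ab}\partial_a V_i$, and by the symmetry of the $K_r$ this one identity kills both the first-order and the zeroth-order contributions of $[\hat K_i,V_j]-[\hat K_j,V_i]$. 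Thus $[\hat H_i,\hat H_j]=[\hat K_i,\hat K_j]$, and the problem becomes a purely geometric statement about the Killing--St\"ackel algebra, independent of the admissible potentials --- as it must be, since \eqref{prc} involves only the metric.

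Next I would exploit formal self-adjointness: each $\hat K_r$ is self-adjoint for the Riemannian volume, so $[\hat K_i,\hat K_j]$ is skew-adjoint. Its third-order symbol is the bracket $\{K_i^{ab}\xi_a\xi_b,K_j^{cd}\xi_c\xi_d\}$, which vanishes because the $K_r$ Poisson-commute; and once the third-order part is gone, skew-adjointness forces the second-order part to vanish as well. Hence $[\hat K_i,\hat K_j]$ is a first-order skew-adjoint operator, determined by a single vector coefficient $A^b_{ij}$ (so that $[\hat K_i,\hat K_j]=\tfrac12(A^b_{ij}\nabla_b+\nabla_b A^b_{ij})$), and it vanishes iff $A^b_{ij}=0$. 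This vector comes solely from the curvature terms $[\nabla_a,\nabla_c]$ generated when the covariant derivatives of the two factors are commuted past each other, and upon contraction it is built from the Ricci tensor.

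The hard part will be the explicit evaluation of $A^b_{ij}$ and its matching with the pre-Robertson expressions. Here I would insert the diagonal form of $R_{ij}$ for an orthogonal metric, use the Killing trace identity $\nabla_a K_r^{ab}=-\tfrac12\nabla^b\!\tr K_r$ to clear the subleading derivatives, and bring in the St\"ackel structure --- invertibility of the St\"ackel matrix together with the factorization $K_r^{ii}=2(S^{-1})_r{}^{i}f_i$ of the diagonal entries --- to re-express $A^b_{ij}$ as a linear combination of the off-diagonal quantities $\partial_i R_{ij}-\Gamma_i R_{ij}$, with $\Gamma_i$ as in \eqref{go}. Since all separation coordinates are essential, the eigenvalues of the $K_r$ are functionally independent and this combination is nondegenerate; therefore the vanishing of $[\hat H_i,\hat H_j]$ for every pair is equivalent, term by term, to \eqref{prc} for all $i\neq j$, which yields both implications simultaneously.
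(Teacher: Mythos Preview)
The paper does not give its own proof of this theorem: it is stated with the citation \cite{Ben4} and used as an imported result, so there is nothing in the present paper to compare your attempt against. What you have written is a reasonable outline of how the original Benenti--Chanu--Rastelli argument proceeds.

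Your reductions are sound. The cancellation of the potential contributions from the linear-in-$p$ part of $\{H_i,H_j\}=0$ is correct, and the self-adjointness/symbol argument that forces $[\hat K_i,\hat K_j]$ down to a first-order skew-adjoint operator is the standard Carter-type mechanism. Where your sketch remains a sketch is exactly the step you flag as ``the hard part'': extracting the vector coefficient $A^b_{ij}$ and showing that, in orthogonal separable coordinates, its vanishing for all pairs $(i,j)$ is \emph{equivalent} to the system $\partial_i R_{ij}-\Gamma_i R_{ij}=0$. In \cite{Ben4} this is done by an explicit computation in the diagonal frame, using the eigenvalue form $K_r^{ii}=\phi_r^{\,i}g^{ii}$ (no sum), the St\"ackel relations among the $\phi_r^{\,i}$, and the identity $R_{ij}=\tfrac32\partial_i\Gamma_j$ for $i\neq j$; the final step is that the matrix of eigenvalues $(\phi_r^{\,i})$ is invertible (it is essentially $S^{-1}$), so the linear combinations can be undone. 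Your plan to invoke invertibility of the St\"ackel matrix for this last equivalence is exactly right, but you should be aware that the intermediate identity expressing $A^b_{ij}$ in terms of $R_{ij}$ and $\Gamma_i$ requires a genuine (if routine) local computation that you have not yet carried out; until that is written down, the ``if and only if'' is asserted rather than proved.
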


Since the Ricci tensor is symmetric, in the orthogonal separable coordinates
the condition \eqref{prc} takes the form
\begin{align}
\label{prc2}\partial_{j}\! \left( \partial_{i}\Gamma_{i} - \frac{1}{2}%
\Gamma_{i}^{2}\right)  = 0,\qquad i\neq j,\quad\mbox{\rm no sum over}\, i,
\end{align}
where we used the fact that $R_{ij} = \frac{3}{2}\partial_{i}\Gamma_{j}$ in
these coordinates.

In the present paper we will consider quantization of St\"ackel systems
associated to a particular class of separation relations \eqref{gsr} with
$S_{i}^{r}(\lambda_{i})$ being monomials, namely, the relations of the form
\eqref{Stack}. We will show that in this case:

\begin{itemize}
\item the pre-Robertson condition \eqref{prc} is equivalent to the Robertson
condition \eqref{rc};

\item hence the related Hamiltonian operators \eqref{qhk} pairwise commute if
and only if the related Schr\"odinger equation is separable;

\item the only class which satisfies the Robertson condition \eqref{rc} is the
Benenti class with $S_{i}^{r}=\lambda_{i}^{n-r}$.
\end{itemize}

The Benenti class is an important case of the separation relations \eqref{gsr}
which has the form \eqref{ben}, see Section 4 below for details.

\section{Classical St\"{a}ckel systems}

Consider a classical St\"{a}ckel system involving $n$ Hamiltonians $H_{i}$
that originate from a set of $n$ separation relations of the form
\begin{equation}
\label{Stack}\sum_{r=1}^{n} H_{r}\lambda_{i}^{\delta_{r}} = f_{i}(\lambda
_{i})\mu_{i}^{2} + \sigma_{i}(\lambda_{i}),\qquad i=1,\ldots,n,
\end{equation}
where $f_{i}$ and $\sigma_{i}$ are arbitrary functions of one argument and
where all $\delta_{i}\in\mathbb{Z}$, $i=1,\ldots,n$, are pairwise distinct.
This is a special case of relations (\ref{gsr}) with a particular choice of
St\"ackel matrix, i.e. $S^{r}_{i}=\lambda^{\delta_{r}}_{i}$.

Without loss of generality we can assume the following ordering:
\begin{align}
\label{Stack1}\delta_{1} > \delta_{2} > \ldots>\delta_{n-1}>\delta_{n} = 0.
\end{align}
We adopt the normalization $\delta_{n}=0$ since we always can divide the left
and right-hand sides of the separation relations \eqref{Stack} by $\lambda
_{i}^{\delta_{n}}$ while preserving their form. Thus, fixing a sequence
$(\delta_{1},\delta_{2},\dots,\delta_{n})$ we can choose a class of
St\"{a}ckel systems. An interested reader can find further particulars on the
classification of generalized St\"{a}ckel systems in \cite{bla1}.

The separation relations \eqref{Stack} constitute a system of $n$ equations
linear in the unknowns $H_{r}$. Solving these relations with respect to
$H_{r}$ we obtain, on the phase space $T^{*}\mathcal{Q}$, $n$ commuting
St\"{a}ckel Hamiltonians of the form
\begin{equation}
H_{r}=\mu^{T}\bar{K}_{r}G\mu+V_{r}(\lambda),\qquad r=1,\ldots,n,
\label{Stackham}%
\end{equation}
where $\lambda=(\lambda_{1},\ldots,\lambda_{n})^{T}$ are the orthogonal
separation coordinates on $\mathcal{Q}$ and $\mu=(\mu_{1},\ldots,\mu_{n})^{T}$
are the associated momenta (here and below the superscript $T$ indicates the
transposed matrix). By definition $\bar{K}_{1} = I$, where $I$ is the unit
matrix. The objects $\bar{K}_{r}$ in (\ref{Stackham}) can be interpreted as
Killing tensors of the type $(1,1)$ on $\mathcal{Q}$ for the (contravariant)
metric $G$. The metric tensor $G$ and all the Killing tensors $\bar{K}_{r}$
are diagonal in the $\lambda$ variables. Then the contravariant tensors $K_{r}
= \bar{K}_{r}G$ of the type $(2,0)$ form a Killing--St\"ackel algebra.

The relations (\ref{Stack}) can be written in the matrix form as
\[
SH=U
\]
where $H=(H_{1},\ldots,H_{n})^{T}$ and $U$ is a St\"{a}ckel vector,
\begin{align*}
U=(f_{1}(\lambda_{1})\mu_{1}^{2}+\sigma_{1}(\lambda_{1}),\ldots,f_{n}%
(\lambda_{n})\mu_{n}^{2}+\sigma_{n}(\lambda_{n}))^{T},
\end{align*}
while $S$ is a classical St\"{a}ckel matrix,
\begin{align}
\label{smatrix}S=
\begin{pmatrix}
\lambda_{1}^{\delta_{1}} & \cdots & \lambda_{1}^{\delta_{n-1}} & 1\\
\vdots & \ddots &  & \vdots\\
\lambda_{n}^{\delta_{1}} & \cdots & \lambda_{n}^{\delta_{n-1}} & 1
\end{pmatrix}
.
\end{align}
Note that our assumption that no $\delta_{i}$ coincide implies that
$\det(S)\neq0$. Thus, the Hamiltonians (\ref{Stackham}) can be represented in
the matrix form as $H=S^{-1}U$, which also means that the metric $G$ in
(\ref{Stackham}) can be written as%
\begin{align*}
G= \diag \left(  f_{1}(\lambda_{1})\left(  S^{-1}\right)  _{11},\ldots
,f_{n}(\lambda_{n})\left(  S^{-1}\right)  _{1n}\right)  ,
\end{align*}
and thus the Killing tensors $\bar{K}_{r}$ in (\ref{Stackham}) read%
\begin{align*}
\bar{K}_{r}= \diag\left(  \left(  S^{-1}\right)  _{r1}/\left(  S^{-1}\right)
_{11},\ldots,\left(  S^{-1}\right)  _{rn}/\left(  S^{-1}\right)  _{1n}\right)
,\qquad r=1,\ldots,n.
\end{align*}

Notice that all St\"{a}ckel systems constructed from the separation relations
(\ref{Stack}) can be divided into various classes \cite{bla1}. A given class
is distinguished by fixing the sequence of natural numbers (\ref{Stack1}),
i.e., a St\"{a}ckel matrix $S$ (\ref{smatrix}). Within a given class the
functions $f_{i}(\lambda_{i})$ parametrize the admissible set of metrics $G$
related to $S$ which share the same set of Killing tensors $\bar{K}_{r}$ while
$\sigma(\lambda_{i})$ parametrize separable potentials.

Using \eqref{go} we can express the contracted Christoffel symbols for the
metric $G$ in the form
\begin{equation}
\label{Gi}\Gamma_{i} = \frac{1}{2}\partial_{i} \log F_{i},
\end{equation}
where
\begin{equation}
\label{Fi}F_{i} = \frac{\prod_{k \neq i} g^{kk}}{g^{ii}} = \frac{\prod_{k \neq
i} (S^{-1})_{1k}}{(S^{-1})_{1i}} \frac{\prod_{k \neq i} f_{k}(\lambda_{k}%
)}{f_{i}(\lambda_{i})} \equiv\frac{\gamma_{i}(\lambda)}{D_{i1}(\lambda)}
\frac{\prod_{k \neq i} f_{k}(\lambda_{k})}{f_{i}(\lambda_{i})}.
\end{equation}
The symbol $D_{i1}(\lambda)$ stands for the $(i,1)$-cofactor of $S$, which is
$\lambda_{i}$-independent and
\begin{align*}
\gamma_{i}(\lambda) := \frac{\prod_{k \neq i} D_{k1}}{(\det S)^{n-2}}.
\end{align*}

We see that $\gamma_{i}$ is a quotient of two polynomials in $\lambda_{i}$,
\begin{equation}
\label{gi}\gamma_{i}(\lambda) = \frac{A_{i}}{B_{i}},
\end{equation}
where
\begin{align*}
A_{i} = \prod_{k \neq i} D_{k1},\qquad D_{k1} =a_{2}^{ik} \lambda_{i}%
^{\delta_{2}} + \ldots+ a_{n-1}^{ik} \lambda_{i}^{\delta_{n-1}} + a_{n}^{ik},
\qquad\mbox{\rm no sum over}\, i,
\end{align*}
and
\begin{align*}
B_{i}= (\det S)^{n-2},\qquad\det S = b^{i}_{1} \lambda_{i}^{\delta_{1}} +
\ldots+ b^{i}_{n-1} \lambda_{i}^{\delta_{n-1}} + b^{i}_{n},\qquad
\mbox{\rm no sum over}\, i.
\end{align*}
The coefficients $a_{j}^{ik}$ and $b^{i}_{j}$ are polynomials in all remaining
variables $\lambda_{s}$ ($s \neq i$).

\begin{lemma}
\label{lrc} The Robertson condition \eqref{rc}
%\eq{\label{rc}
%\partial_j \Gamma_i = 0 \qquad i \neq j
%}
holds if and only if for each $i=1,\dots,n$ the function $\gamma_{i}$ is
$\lambda_{i}$-independent.\looseness=-1
\end{lemma}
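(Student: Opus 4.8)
The Robertson condition, by \eqref{rc}, reads $\partial_i \Gamma_j = 0$ for $i \neq j$. Using the explicit form \eqref{Gi} of the contracted Christoffel symbols, namely $\Gamma_j = \tfrac{1}{2}\partial_j \log F_j$, I would first translate the condition into a statement purely about the functions $F_j$ introduced in \eqref{Fi}. Concretely, $\partial_i \Gamma_j = \tfrac{1}{2}\partial_i \partial_j \log F_j = 0$ for $i \neq j$, so the plan is to analyze the mixed logarithmic derivatives of each $F_j$.

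The key structural observation is the factorization $F_j = \tfrac{\gamma_j(\lambda)}{D_{j1}(\lambda)}\prod_{k\neq j} f_k(\lambda_k)/f_j(\lambda_j)$ from \eqref{Fi}. Taking $\log$ turns this product into a sum, so $\log F_j$ splits into three pieces: $\log \gamma_j$, $-\log D_{j1}$, and the contribution $\sum_{k\neq j}\log f_k(\lambda_k) - \log f_j(\lambda_j)$. I would then apply $\partial_i\partial_j$ with $i\neq j$ to each piece separately. The single-variable terms $\log f_k(\lambda_k)$ vanish under any mixed second derivative, so they contribute nothing. The cofactor $D_{j1}$ is $\lambda_j$-independent (as stated after \eqref{Fi}), hence $\partial_j \log D_{j1} = 0$ and this term drops out as well. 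Therefore the Robertson condition collapses to $\partial_i\partial_j \log\gamma_j = 0$ for all $i\neq j$.

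It remains to show that $\partial_i\partial_j \log\gamma_j = 0$ for all $i\neq j$ is equivalent to each $\gamma_i$ being $\lambda_i$-independent. The easy direction is immediate: if $\gamma_j$ does not depend on $\lambda_j$, then $\partial_j\log\gamma_j=0$ and a fortiori the mixed derivative vanishes. For the converse I would argue that $\partial_i\partial_j\log\gamma_j=0$ for every $i\neq j$ forces $\partial_j\log\gamma_j$ to be independent of all variables $\lambda_i$ with $i\neq j$, i.e.\ a function of $\lambda_j$ alone; I would then exploit the rational structure \eqref{gi}, $\gamma_j = A_j/B_j$ with $A_j=\prod_{k\neq j}D_{k1}$ and $B_j=(\det S)^{n-2}$, together with the fact that the coefficients $a^{jk}_l$ and $b^j_l$ are polynomials in the remaining variables $\lambda_s$ ($s\neq j$), to conclude that $\partial_j \log\gamma_j$ cannot be a nonzero function of $\lambda_j$ alone unless $\gamma_j$ is constant in $\lambda_j$.

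The main obstacle is precisely this last implication: extracting genuine $\lambda_j$-independence of $\gamma_j$ from the weaker separability statement $\partial_j\log\gamma_j=\varphi(\lambda_j)$. The delicate point is that $A_j$ and $B_j$ are both polynomials in $\lambda_j$ with coefficients depending on the other variables, so one must rule out the possibility that the $\lambda_j$-dependence of $A_j$ and of $B_j$ factors out in a coordinated way producing a nontrivial $\varphi(\lambda_j)$. I expect this to require a careful examination of how the exponents $\delta_r$ enter the cofactor expansions and the determinant, showing that no such conspiracy can occur for pairwise distinct $\delta_r$; this combinatorial analysis of the polynomial structure is where the real work lies, whereas the reduction in the first two paragraphs is essentially formal.
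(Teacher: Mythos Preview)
Your reduction is exactly the paper's: after stripping off the single-variable factors $f_k$ and the $\lambda_j$-independent cofactor $D_{j1}$, the Robertson condition becomes $\partial_i\partial_j\log\gamma_j=0$ for $i\neq j$, which is equivalent to a multiplicative splitting $\gamma_j(\lambda)=\phi_j(\lambda_j)\,\psi_j(\lambda_1,\dots,\lambda_{j-1},\lambda_{j+1},\dots,\lambda_n)$. You have also correctly located the gap: one must show that the only such splitting has $\phi_j$ constant, and you do not yet have an argument for this.

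The missing ingredient is \emph{homogeneity}, not a delicate combinatorial analysis of the exponents. The determinant $\det S$ of the St\"ackel matrix \eqref{smatrix} is homogeneous of total degree $\delta:=\delta_1+\cdots+\delta_{n-1}$ in $\lambda$, and each cofactor $D_{k1}$ is homogeneous of degree $\delta-\delta_1$. Hence, in the expansion of $D_{k1}$ (respectively $\det S$) as a polynomial in the single variable $\lambda_j$, the coefficient of $\lambda_j^{\delta_r}$ is a nonzero homogeneous polynomial in the remaining variables of degree $\delta-\delta_1-\delta_r$ (respectively $\delta-\delta_r$). Since the $\delta_r$ are pairwise distinct, these coefficients have pairwise distinct homogeneity degrees and are therefore linearly independent over the constants. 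It follows that neither $D_{k1}$ nor $\det S$ can vanish identically at a \emph{constant} value $\lambda_j=c$, so the rational function $\gamma_j=A_j/B_j$ has no $\lambda_j$-zeros or $\lambda_j$-poles that are independent of the other coordinates. But any nonconstant rational $\phi_j(\lambda_j)$ would force precisely such a constant zero or pole on $\gamma_j$; hence $\phi_j$ must be constant. This is the ``conspiracy'' you anticipated ruling out, and once homogeneity is noticed the argument is short.
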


\begin{proof}
Upon using \eqref{Gi} the Robertson condition \eqref{rc}
boils down to
\eq{\label{rcg}
\pr_i\pr_j\log F_i = \pr_i\pr_j \log \gamma_i=0,\qquad i\neq j,\qquad\mbox{\rm no sum over}\, i,
}
which can hold if and only if all $\gamma_i$ factorize as
\eq{\label{fact}
\gamma_i(\la) = \phi_i(\la_i)\psi_i(\la_1,\dots,\la_{i-1},\la_{i+1},\dots,\la_{n}),
}
i.e., $\phi_i(\la_i)$ is a function of $\la_i$ alone and $\psi_i$
is independent of $\la_i$.
The determinant of the St\"ackel matrix \eqref{smatrix}
is a homogeneous function of the coordinates $\la_i$ of degree $\delta = \delta_1 + \ldots + \delta_{n-2} + \delta_{n-1}$, that is,
\eqq{
\det S (\kappa \la) = \kappa^\delta \det S(\la).
}
Each cofactor $D_{k1}$ is a homogeneous function of degree $\delta - \delta_1$. Therefore,
the coefficients $a_j^{ik}$ and $b^i_j$ in the factorization \eqref{gi}
are all nonzero homogeneous polynomials in $\la_s$ ($s\neq i$) of degrees
$\delta - \delta_1 - \delta_j$ and $\delta - \delta_j$, respectively.
These degrees are different by virtue of our assumption that $\delta_i$ are pairwise distinct,
and thus the quantities $a_j^{ik}$ and $b^i_j$ are linearly independent as functions
for any fixed $i$. This means that
$D_{k1}$ and $\det S$ as polynomials in $\la_i$ cannot have constant zeros independent
of all other coordinates $\la_s$. Thus, $\gamma_i$ as a rational function in $\la_i$
cannot have constant zeros or poles. Hence $\gamma_i$ cannot factorize
in the fashion given by \eqref{fact} unless all $\phi_i$ are constants, that is, for each $i=1,\dots,n$
the function $\gamma_i$ is independent of $\la_i$.
\end{proof}

\begin{theorem}
For the class of St\"ackel systems related to separation relations
\eqref{Stack}, the pre-Robertson condition \eqref{prc} is satisfied  if and
only if the Robertson condition \eqref{rc} holds.
\end{theorem}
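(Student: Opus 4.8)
The plan is to establish the nontrivial implication, that the pre-Robertson condition \eqref{prc} forces the Robertson condition \eqref{rc}; the converse is immediate, because if $R_{ij}=0$ for $i\neq j$ then both terms on the left-hand side of \eqref{prc} vanish identically. By Lemma~\ref{lrc} it is enough to show that, under the pre-Robertson condition, each $\gamma_i$ is independent of $\lambda_i$. The first step is to put the contracted Christoffel symbols into a form adapted to \eqref{Fi}. Using \eqref{Gi}, \eqref{Fi} together with the fact that the cofactor $D_{i1}$ and the product $\prod_{k\neq i}f_k(\lambda_k)$ are both independent of $\lambda_i$, all $\lambda_i$-independent factors of $F_i$ are annihilated by $\partial_i$, so that
\[
\Gamma_i=\tfrac12\,\partial_i\log\frac{\gamma_i}{f_i}=\tfrac12\bigl(\rho_i-\tau_i\bigr),\qquad \rho_i:=\partial_i\log\gamma_i,\quad \tau_i:=\partial_i\log f_i .
\]
Here $\tau_i$ is a function of $\lambda_i$ alone, whereas $\rho_i$ is a rational function of $\lambda_i$, since $\gamma_i$ is the rational function \eqref{gi}.

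Next I would substitute this into the pre-Robertson condition written in the form \eqref{prc2}. Because $\tau_i$ and its $\lambda_i$-derivative are killed by $\partial_j$ for $j\neq i$, a direct computation collapses \eqref{prc2} to
\[
2\,\partial_i\partial_j\rho_i=\bigl(\rho_i-\tau_i\bigr)\,\partial_j\rho_i,\qquad i\neq j,\quad\text{no sum over }i .
\]
The virtue of this form is that, viewed in the variable $\lambda_i$ with the remaining $\lambda_l$ ($l\neq i$) treated as parameters, it is a linear homogeneous first-order ordinary differential equation for the quantity $Y_{ij}:=\partial_j\rho_i=\partial_i\partial_j\log\gamma_i$. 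Integrating in $\lambda_i$ and using $\int\rho_i\,d\lambda_i=\log\gamma_i$ and $\int\tau_i\,d\lambda_i=\log f_i$ (up to $\lambda_i$-independent terms) gives
\[
Y_{ij}=C_{ij}\,\sqrt{\gamma_i/f_i},
\]
where $C_{ij}$ is independent of $\lambda_i$.

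The crux of the argument --- and the step I expect to be the main obstacle --- is to conclude that $C_{ij}\equiv0$. On the one hand $Y_{ij}=\partial_i\partial_j\log\gamma_i$ is rational in $\lambda_i$, because $\gamma_i$ is rational in $\lambda_i$ by \eqref{gi}. On the other hand $f_i$ is an essentially arbitrary function of $\lambda_i$ parametrizing the admissible metrics of the class, so $\sqrt{\gamma_i/f_i}$ fails to be rational in $\lambda_i$; comparing the two expressions therefore forces $C_{ij}=0$, whence $\partial_i\partial_j\log\gamma_i=0$ for all $i\neq j$. Turning this heuristic into a proof valid for every admissible $f_i$ is exactly where the structural content of Lemma~\ref{lrc} is needed: one compares the $\lambda_i$-pole structure of the two sides and uses the homogeneity/degree argument of that lemma, by which the numerator and denominator of $\gamma_i$ have no $\lambda_i$-roots that are constant in the remaining variables, so that the moving poles of $\gamma_i$ cannot combine with $Y_{ij}^2$ to yield a $\lambda_i$-only factor $f_i$. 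Once $\partial_i\partial_j\log\gamma_i=0$ is secured it is precisely the factorization \eqref{fact}, and the non-factorization argument of Lemma~\ref{lrc} then forces each $\gamma_i$ to be $\lambda_i$-independent, which by that lemma is the Robertson condition \eqref{rc}.
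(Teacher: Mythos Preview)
Your derivation up through the ODE is clean and correct: writing $\Gamma_i=\tfrac12(\rho_i-\tau_i)$ and substituting into \eqref{prc2} does give $2\partial_iY_{ij}=(\rho_i-\tau_i)Y_{ij}$, whence $Y_{ij}=C_{ij}\sqrt{\gamma_i/f_i}$ with $C_{ij}$ independent of $\lambda_i$. This is a genuinely different route from the paper, which never integrates but instead expands \eqref{prc2} directly into
\[
\partial_j\!\left(\frac{\partial_i^2\gamma_i}{\gamma_i}-\frac{5}{4}\frac{(\partial_i\gamma_i)^2}{\gamma_i^2}+\frac{1}{2}\frac{\partial_i\gamma_i}{\gamma_i}\,\frac{\partial_i f_i}{f_i}\right)=0
\]
and then uses the joint homogeneity of $\gamma_i$ (and hence of the first two terms, which are homogeneous of degree $-2$ in all $\lambda$'s) together with the no-constant-pole property from Lemma~\ref{lrc} to force the $f_i$-dependent term to decouple, giving $\partial_j(\partial_i\gamma_i/\gamma_i)=0$ directly.

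The gap in your argument is the step $C_{ij}=0$. Your first justification, that $f_i$ is ``essentially arbitrary,'' is not available: the theorem concerns a \emph{fixed} system with fixed $f_i$, so you cannot vary $f_i$ after the fact. Your fallback, comparing pole structures, points in the right direction but is incomplete as stated. The relation $Y_{ij}^2 f_i=C_{ij}^2\gamma_i$ involves two objects ($Y_{ij}$ and $\gamma_i$) built from the same St\"ackel data, so the moving poles of $\gamma_i$ could in principle be matched by those of $Y_{ij}^2$; you have not ruled this out. What actually drives the conclusion is the \emph{homogeneity} of $Y_{ij}^2/\gamma_i$: it is a homogeneous rational function of all the $\lambda$'s, yet must factor as a function of $\lambda_i$ alone times a $\lambda_i$-independent function. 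A homogeneous rational function can do this only if the $\lambda_i$-factor is a pure power $\lambda_i^a$, and then the no-constant-root analysis of Lemma~\ref{lrc} forces $a=0$ --- but you still need to exclude the case $f_i=\mathrm{const}$, where one must show that $Y_{ij}^2/\gamma_i$ cannot be $\lambda_i$-independent unless $Y_{ij}=0$. The paper sidesteps this by decomposing by homogeneity degree \emph{before} any integration, so the $f_i$-term is isolated as the only possible inhomogeneous piece and the residual constant-pole argument finishes it. Your ODE approach is elegant, but to close it you need to import essentially the same homogeneity-plus-pole reasoning at the level of $Y_{ij}^2/\gamma_i$, and you should spell this out rather than defer to Lemma~\ref{lrc}.
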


\begin{proof}
Using \eqref{prc2} and \eqref{Gi} turns the pre-Robertson condition \eqref{prc}
in the orthogonal coordinates $\la$ into
\eqq{
\pr_j\bra{\pr_i\log F_i - \frac{1}{4}(\log F_i)^{\,2}}=0,\qquad i\neq j,\qquad\mbox{\rm no sum over}\, i,
}
and employing (\ref{Fi}) for $F_i$ yields
\eq{\label{prg}
\pr_j\bra{\frac{\pr_i^2\gamma_i}{\gamma_i}-\frac{5}{4}\frac{(\pr_i\gamma_i)^2}{\gamma_i^2}
+\frac{1}{2}\frac{\pr_i\gamma_i}{\gamma_i}\frac{\pr_i f_i}{f_i}} = 0,\qquad i\neq j, \qquad\mbox{\rm no sum over}\, i.
}
Since $\gamma_i$ is a homogeneous rational function, the first and the second term in the bracket in \eqref{prg}
are also homogeneous functions of degree $-2$. The function $f_i$ depends only on $\la_i$. Thus, the third term, $\frac{\pr_i\gamma_i}{\gamma_i}\frac{\pr_i f_i}{f_i}$, can have the same degree of homogeneity as the other two
only if $\frac{\pr_i f_i}{f_i}$ is proportional to $\la_i^{-1}$. This would mean that the third term has a pole at
$\la_i=0$. Using the results of the analysis of properties of $\gamma_i$ in the proof of Lemma~\ref{lrc}, we see that
the first and the second term cannot have a constant pole $\la_i=0$. Therefore, \eqref{prg} can hold only if
\eqq{
\pr_j\bra{\frac{\pr_i^2\gamma_i}{\gamma_i}-\frac{5}{4}\frac{(\pr_i\gamma_i)^2}{\gamma_i^2}} = 0\quad\mbox{\rm and}\quad
\pr_j\bra{\frac{\pr_i\gamma_i}{\gamma_i}\frac{\pr_i f_i}{f_i}} = 0,\qquad i\neq j,\qquad\mbox{\rm no sum over}\, i.
}
The second of the above conditions is equivalent to
\eqq{
\pr_j\bra{\frac{\pr_i\gamma_i}{\gamma_i}} = 0\qquad\iff\qquad  \pr_i\pr_j \log \gamma_i=0,\qquad i\neq j,\qquad\mbox{\rm no sum over}\, i,
}
which is nothing but \eqref{rcg}, i.e., the pre-Robertson condition indeed reduces to the Robertson condition.
\end{proof}

\begin{lemma}
\label{lem} The equality
\begin{align}
\label{nc}\delta_{k} = (n - k)\delta_{n-1},
\end{align}
where $k=1,\ldots,n-1$, is a necessary condition for the functions $\gamma
_{i}(\lambda)$ to be independent of $\lambda_{i}$ and thus for the Robertson
condition \eqref{rc} to hold.
\end{lemma}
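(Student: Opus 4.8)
The plan is to translate the hypothesis into a statement about ordinary polynomials in the single variable $\la_i$ and then read off the exponents by matching leading and subleading behaviour. First I would use the reduction already in place: by the proof of Lemma~\ref{lrc}, the function $\gamma_i=A_i/B_i$, with $A_i=\prod_{k\neq i}D_{k1}$ and $B_i=(\det S)^{n-2}$, is independent of $\la_i$ if and only if $A_i$ and $B_i$ are proportional as polynomials in $\la_i$, the proportionality factor depending only on the remaining coordinates. In particular $A_i$ and $B_i$ must have the same $\la_i$-degree and the same set of exponents. Since the coefficients $a_j^{ik}$ and $b_j^i$ are nonzero (again by the proof of Lemma~\ref{lrc}), the top coefficients do not vanish, so $A_i\sim\bra{\prod_{k\neq i}a_2^{ik}}\la_i^{(n-1)\delta_2}$ and $B_i\sim(b_1^i)^{n-2}\la_i^{(n-2)\delta_1}$ as $\la_i\to\infty$. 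Matching the leading degrees forces $(n-1)\delta_2=(n-2)\delta_1$, which is exactly the relation between $\delta_1$ and $\delta_2$ implied by \eqref{nc}.

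Next I would extract the remaining relations by descending through the exponents. The second-highest exponent of $B_i$ is robustly $(n-3)\delta_1+\delta_2$, with coefficient a nonzero multiple of $(b_1^i)^{n-3}b_2^i$, whereas the largest sub-maximal exponent that $A_i$ can carry is $(n-2)\delta_2+\delta_3$. Equality of supports then forces $(n-2)\delta_2+\delta_3=(n-3)\delta_1+\delta_2$, and together with the previous relation this yields the next case of \eqref{nc}; iterating level by level produces the recursion $\delta_k-\delta_{k+1}=\delta_{n-1}$, i.e.\ $\delta_k=(n-k)\delta_{n-1}$. A clean way to organize this, keeping all coefficients explicit, is to factor each generalized Vandermonde determinant as an ordinary Vandermonde times a Schur polynomial, $\det S=V\,s_\mu$ and $D_{k1}=\pm V_k\,s_\nu(\la_{\neq k})$, where $\nu$ is obtained from $\mu$ by deleting its largest part. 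The Vandermonde factors contribute the same $\la_i$-dependence $\prod_{j\neq i}(\la_i-\la_j)^{n-2}$ to both $A_i$ and $B_i$ and cancel, reducing the hypothesis to $\sum_{k\neq i}\pr_i\log s_\nu(\la_{\neq k})=(n-2)\,\pr_i\log s_\mu$.

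The main obstacle is controlling cancellations in the subleading coefficients: these are sums of products of cofactors, and, worse, the candidate exponents become degenerate precisely on the arithmetic-progression locus that we wish to single out, so one cannot naively declare what the $m$-th largest exponent is. I expect the decisive technical point to be showing that the relevant near-leading coefficient, such as $\sum_{k\neq i}a_3^{ik}/a_2^{ik}$, is not identically zero; this rules out the alternative in which the gaps $\delta_k-\delta_{k+1}$ increase and thereby pins the progression. The homogeneity and pairwise distinct degrees of the $a_j^{ik}$ and $b_j^i$ recorded in the proof of Lemma~\ref{lrc}, together with the explicit Schur/Vandermonde factorization above, are the tools I would use to establish this non-vanishing and so close the argument.
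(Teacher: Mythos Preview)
Your plan coincides with the paper's argument in its overall shape: both reduce to the observation that $A_i=\prod_{k\neq i}D_{k1}$ and $B_i=(\det S)^{n-2}$ must carry the same set of $\la_i$-exponents, and both extract the first relation $(n-1)\delta_2=(n-2)\delta_1$ by matching leading degrees. You also put your finger on the genuine difficulty at the next levels, namely that the candidate subleading coefficient $\sum_{k\neq i}a_3^{ik}\prod_{l\neq i,k}a_2^{il}$ in $A_i$ might vanish, so one cannot na\"ively assert what the second-largest surviving exponent is.

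Where your plan diverges from the paper is in how this obstacle is dispatched. The paper neither proves non-vanishing of such coefficients nor invokes the Schur/Vandermonde factorization; instead it leans on the hypothesis that the $\delta_j$ are strictly decreasing \emph{integers} with $\delta_n=0$, hence $\delta_j\geq n-j$. This integrality is used to organize the top $n-2$ candidate exponents of $A_i$ and $B_i$ into groups obeying the uniform lower bound $n^2-3n-k+3$, and the matching at the first two levels gives $\delta_2=\tfrac{n-2}{n-1}\delta_1$ and $\delta_3=\tfrac{n-3}{n-1}\delta_1$. The inductive step is then arithmetic rather than algebraic: any further $\delta_j$ determined by an exponent-matching relation is forced to lie in the lattice $\tfrac{1}{n-1}\delta_1\,\Z$, say $\delta_j=\tfrac{n-m_j}{n-1}\delta_1$ with $m_j\in\Z$, and the constraints $0<\delta_{n-1}<\cdots<\delta_1$ pigeonhole the $m_j$ into $m_j=j$. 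This sidesteps the coefficient-cancellation question entirely. Your Schur-polynomial route may well lead to a clean alternative proof, but it is a different mechanism from the one the paper uses, and you would still need to supply the non-vanishing argument you flag as the ``decisive technical point'' to make it complete.
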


\begin{proof}
Fix $i$ and assume that the function $\gamma_i(\la)$ from \eqref{gi} is independent of $\la_i$. This means that
the polynomials $A_i$ and $B_i$
must be proportional to each other, with the proportionality factor being a $\la_i$-independent function
which can, however, depend on $\lambda_j$ for all $j\neq i$. This means that the polynomials $A_i$ and $B_i$ must contain
the identical sets of powers of $\lambda_i$.  We will show that this implies the condition
\eq{\label{rel}
\delta_{j} = \frac{n-j}{n-1}\delta_1\qquad j=2,\ldots,n-1,
}
which is equivalent to (\ref{nc}).
Bearing in mind that all powers $\delta_j$ are natural numbers
such that $\delta_{j}-\delta_{j+1}\me 1$,
$\delta_{j}\me n-j$, we can order the powers of $\lambda_i$ in $A_i$ and $B_i$ as follows.
{\postdisplaypenalty=10000
First, the $n-2$ highest powers of $\lambda_i$ in $A_i$ can be arranged into
the decreasing sequence
\eqq{
\begin{array}{llllllllll}
\lambda_i^{(n-1)\delta_2} & \lambda_i^{(n-2)\delta_2 + \delta_3} &
\lambda_i^{(n-3)\delta_2 + 2\delta_3} &
\lambda_i^{(n-4)\delta_2 + 3\delta_3} & & \la_i^{(n-k)\delta_2 + (k-1)\delta_3} \\
& & \lambda_i^{(n-3)\delta_2 + \delta_2 + \delta_4} &
\lambda_i^{(n-4)\delta_2 + \delta_2 + \delta_3 + \delta_4} & \cdots &\qquad \vdots & \cdots \\
& & & \lambda_i^{(n-4)\delta_2 + 2\delta_2 + \delta_5} & & \la_i^{(n-k)\delta_2 + (k-2)\delta_2 +  \delta_{k+1}}
\end{array},
}
where each group is numbered by $k=1,2,\ldots,n-2$,
the $k^{\text{th}}$ group has the form
\eqq{
\la_i^{p_k(r)}:=    \la_i^{(n-k)\delta_2 + \delta_{r_1} +
\ldots + \delta_{r_{k-1}}}\qquad r_1 +  \ldots + r_{k-1} = 3(k-1),
}
and $p_k(r)$ satisfies $p_k(r)\me n^2-3n-k+3$.}\looseness=-1
Likewise, we have the following decreasing sequence of the $n-2$ highest powers of $\lambda_i$ contained in $B_i$:
\eqq{
\begin{array}{lllllllll}
\lambda_i^{(n-2)\delta_1} & \lambda_i^{(n-3)\delta_1 + \delta_2} &
\lambda_i^{(n-4)\delta_1 + 2\delta_2} &
\lambda_i^{(n-5)\delta_1 + 3\delta_2} & & \la_i^{(n-k-1)\delta_1 +(k-1)\delta_2}\\
& & \lambda_i^{(n-4)\delta_1 + \delta_1 + \delta_3} &
\lambda_i^{(n-5)\delta_1 + \delta_1 + \delta_2 + \delta_3} & \cdots &\qquad \vdots & \cdots  \\
& & & \lambda_i^{(n-5)\delta_1 + 2\delta_1 + \delta_4} & & \la_i^{(n-k-1)\delta_1 + (k-2)\delta_1 + \delta_{k}}
\end{array};
}
for $k=1,\ldots,n-2$ we have
\eqq{
\la_i^{q_k(s)}:= \la_i^{(n-k-1)\delta_1 + \delta_{s_1} + \ldots + \delta_{s_{k-1}}}\qquad s_1 +  \ldots + s_{k-1} = 2(k-1)
}
and $q_k(s)$ satisfies $q_k(s)\me n^2-3n-k+3$.
The sequences of $n-2$ highest powers of $\lambda_i$ in $A_i$ and $B_i$ must be identical. Taking into account
the above orderings of powers of $\lambda_i$ and equating the first two elements of the sets in question
we find that
$p_1=q_1$ and $p_2=q_2$. Hence, $\delta_2 = \frac{n-2}{n-1}\delta_1$ and $\delta_3 =  \frac{n-3}{n-1}\delta_1$ in accordance with \eqref{rel}. Continuing the process by induction, we see that the remaining $\delta_j$ will have to be
of the form
\eqq{
\delta_j = \frac{n-m_j}{n-1}\delta_1,
}
where all $m_j$ are integers.  We must require that for each $j$ we have $n>m_{j}>m_{j-1}$ in order that $\delta_j>\delta_{j-1}$
and $\delta_j\neq 0$ for $j\neq n$.
Thus, these inequalities can hold only if $m_j =j$, that is, \eqref{rel} holds.
In this case each of the above groups in $A_i$ and $B_i$ boils down to a single monomial of the degree
$p_k = q_k = n^2-3n-k+3$.
\end{proof}

\section{The Benenti class}

An important class of St\"{a}ckel systems is obtained by setting $\delta
_{i}=n-i$. Then the separation relations \eqref{Stack} take the form
\begin{align}
\label{ben}\sum_{r=1}^{n} H_{r}\lambda_{i}^{n-r}=f_{i}(\lambda_{i})\mu_{i}%
^{2}+\sigma_{i}(\lambda_{i})\qquad i=1,\ldots,n.
\end{align}
The resulting Hamiltonians $H_{i}$ constitute a completely integrable system
that we will call a Benenti system to honor the fundamental contributions
\cite{Ben1,Ben2} of S. Benenti to the study of these objects. The systems in
question enjoy a number of remarkable properties.
%at the classical and quantum level.
For instance, among them we find plenty of superintegrable systems \cite{BS05}
which are exactly solvable in both classical and quantum mechanics \cite{S}. \looseness=-1

Note that if $f_{i}$ and $\sigma_{i}$ are the same for all $i$, i.e.,
$f_{i}=f(\lambda_{i})$ and $\sigma_{i}=\sigma(\lambda_{i})$, then the
relations (\ref{ben}) are nothing but $n$ copies of a single separation curve
\begin{align}
\label{ben0}\sum_{r=1}^{n} H_{r}\lambda^{n-r}=f(\lambda)\mu^{2}+\sigma
(\lambda).
\end{align}

For the Benenti systems it is possible to give compact formulas for many
objects introduced above. In particular, the St\"ackel matrix $S$ is the
Vandermonde matrix, $S_{ij} = \lambda_{i}^{n-j}$, with the determinant
\begin{align*}
\det S = \prod_{1\leqslant i<j\leqslant n} (\lambda_{j}-\lambda_{i}).
\end{align*}
The metric $G$ and the Killing tensors $\bar{K}_{r}$ in \eqref{Stackham} are
given explicitly by the formulas
\begin{align*}
G=\diag \! \left( \frac{f_{1}(\lambda_{1})}{\Delta_{1}},\ldots,\frac
{f_{n}(\lambda_{n})}{\Delta_{n}}\right) ,\quad\bar{K}_{r} & =-\diag\left(
\frac{\partial\rho_{r}}{\partial\lambda_{1}},\ldots,\frac{\partial\rho_{r}%
}{\partial\lambda_{n}}\right) ,\quad\Delta_{i}=\prod_{j\neq i} (\lambda
_{i}-\lambda_{j}),
\end{align*}
where $r=1,\ldots,n$. Here $\rho_{r}=\rho_{r}(\lambda)$ are the Vi\`{e}te
polynomials (symmetric polynomials with the sign factors) in the $\lambda$
variables:%
\begin{align*}
\rho_{r}(\lambda)=(-1)^{r} \sum_{1\leqslant s_{1}<\ldots<s_{r}\leqslant n}
\lambda_{s_{1}}\cdot\ldots\cdot\lambda_{s_{r}}\qquad r=1,\ldots,n.
\end{align*}

A compact formula (first presented in \cite{macartur2011}) for the separable
potentials $V_{r}^{(k)}$ related to $\sigma_{i}(\lambda_{i})=\lambda_{i}^{k}$
\eqref{ben} reads
\begin{equation}
V^{(k)}=F^{k}V^{(0)},\qquad k\in\mathbb{Z},\label{Vk}%
\end{equation}
where $V^{(k)}=(V_{1}^{(k)},\ldots,V_{n}^{(k)})$, $V^{(0)}=(0,\ldots,0,1)$
and
\begin{equation}
F=\left(
\begin{array}
[c]{cccc}%
-\rho_{1} & 1 &  & \\
-\rho_{2} &  & \ddots & \\
\vdots &  &  & 1\\
-\rho_{n} & 0 & \cdots & 0
\end{array}
\right) .\label{Fmat}%
\end{equation}

Let us stress again that the Killing tensors $\bar{K}_{r}$ do not depend on a
particular choice of $f_{i}$ and $\sigma_{i}$. It can be shown that as long as
the functions $f_{i}$ are the same for all $i$, i.e., $f_{i}=f(\lambda_{i})$,
and $f$ is a polynomial of degree less than $n+1$, then the metric $G$ is flat
(the explicit formulas for its flat coordinates can be found in \cite{BS}, and
the quantization in these coordinates is discussed in \cite{S}), while if $f$
is a polynomial of degree $n+1$ then $G$ has constant but nonvanishing
curvature. It is the reason why this particular class (\ref{ben}) of the
St\"ackel systems includes a majority of known separable systems of classical
mechanics. For all other classes of St\"ackel systems given by different
sequences \eqref{Stack1} related metrics are neither flat nor of constant curvature.

\begin{proposition}
The Benenti class \eqref{ben} satisfies the Robertson condition \eqref{rc}.
\end{proposition}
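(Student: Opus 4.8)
The plan is to reduce the statement to the criterion already furnished by Lemma~\ref{lrc} and then verify that criterion by a short direct computation using the explicit formulas available for the Benenti class. By Lemma~\ref{lrc} the Robertson condition \eqref{rc} holds exactly when each $\gamma_i$ is independent of $\la_i$, and by the chain \eqref{rcg}, \eqref{Gi}, \eqref{Fi} this is equivalent to the vanishing of the mixed derivatives $\pr_i\pr_j\log F_i=0$ for all $i\neq j$. So I would first write $F_i$ for the Benenti metric: since $G$ is diagonal with $g^{kk}=f_k(\la_k)/\Delta_k$ and $\Delta_k=\prod_{l\neq k}(\la_k-\la_l)$, the definition of $F_i$ collapses into a convenient symmetric shape. (As a consistency check, note that for the Benenti choice $\delta_k=n-k$ one has $\delta_{n-1}=1$, so $\delta_k=(n-k)\delta_{n-1}$ and the necessary condition of Lemma~\ref{lem} is automatically met; but since that lemma only yields necessity, the actual content is the sufficiency computation below.)

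The key simplification I would exploit is
\[
F_i=\frac{\prod_{k\neq i}g^{kk}}{g^{ii}}=\frac{\det G}{(g^{ii})^{2}},
\qquad\text{hence}\qquad
\log F_i=\log\abs{\det G}-2\log\abs{g^{ii}}.
\]
Substituting $g^{ii}=f_i/\Delta_i$ and $\det G=\prod_k f_k/\Delta_k$, the factors $f_k=f_k(\la_k)$ enter only through single-variable logarithms $\log\abs{f_k(\la_k)}$, which are annihilated by every mixed derivative $\pr_i\pr_j$ with $i\neq j$. Everything therefore reduces to the $\Delta$-terms, leaving
\[
\pr_i\pr_j\log F_i=-\,\pr_i\pr_j\sum_k\log\abs{\Delta_k}+2\,\pr_i\pr_j\log\abs{\Delta_i},\qquad i\neq j .
\]

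The last step is to evaluate these derivatives of $\log\abs{\Delta_k}=\sum_{l\neq k}\log\abs{\la_k-\la_l}$. The observation that makes this quick is that, for fixed $i\neq j$, only the factors depending on both $\la_i$ and $\la_j$ survive the mixed derivative: in $\sum_k\log\abs{\Delta_k}$ these are the two terms $\log\abs{\la_i-\la_j}$ (from $\Delta_i$) and $\log\abs{\la_j-\la_i}$ (from $\Delta_j$), whereas in $\log\abs{\Delta_i}$ only $\log\abs{\la_i-\la_j}$ contributes. A direct computation gives $\pr_i\pr_j\log\abs{\la_i-\la_j}=(\la_i-\la_j)^{-2}$, whence $\pr_i\pr_j\sum_k\log\abs{\Delta_k}=2(\la_i-\la_j)^{-2}$ and $2\,\pr_i\pr_j\log\abs{\Delta_i}=2(\la_i-\la_j)^{-2}$. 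These cancel, so $\pr_i\pr_j\log F_i=0$, which is \eqref{rc}.

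The computation is entirely routine; the only place demanding care is the bookkeeping of signs and of exactly which pairs $(k,l)$ genuinely involve both $\la_i$ and $\la_j$, since a single sign slip there would spuriously destroy the cancellation. I would guard against this in two ways: by relying on the clean rewriting $F_i=\det G/(g^{ii})^{2}$ above, and, as an independent confirmation, by checking directly through Lemma~\ref{lrc} that with $S$ the Vandermonde matrix the cofactor $D_{i1}$ renders $\gamma_i=\prod_{k\neq i}D_{k1}/(\det S)^{n-2}$ manifestly $\la_i$-independent.
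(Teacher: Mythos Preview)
Your proof is correct. You verify the Robertson condition in the equivalent form $\pr_i\pr_j\log F_i=0$ (drawn from the proof of Lemma~\ref{lrc}) by writing $F_i=\det G/(g^{ii})^{2}$, stripping off the single-variable factors $f_k(\la_k)$, and tracking which logarithmic terms in $\sum_k\log\abs{\Delta_k}$ and $\log\abs{\Delta_i}$ survive a mixed derivative; the resulting $2(\la_i-\la_j)^{-2}$ contributions cancel. The paper takes a more algebraic route: it writes down the closed formula $\gamma_i/D_{i1}=(-1)^{n}\Delta_i/\prod_{k\neq i}\Delta_k$ and simply observes that the $\la_i$-dependent factors in numerator and denominator cancel (each $\Delta_k$ with $k\neq i$ carries exactly one factor $(\la_k-\la_i)$), so $\gamma_i$ is $\la_i$-independent and Lemma~\ref{lrc} applies. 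Your approach is a clean analytic check that does not need the full strength of Lemma~\ref{lrc}; the paper's gives a compact explicit formula for $\gamma_i/D_{i1}$. Your closing cross-check via the Vandermonde cofactors is essentially the paper's argument, so you already have both proofs in hand.
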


\begin{proof}
Recall that in the case under study each cofactor $D_{i1}$ is independent of $\la_i$.
We have
\eqq{
\frac{\gamma_i}{D_{i1}} = (-1)^n\frac{\Delta_i}{\prod_{k\neq i} \Delta_k},
}
where on the right-hand side all terms involving $\la_i$ cancel. Hence,
each $\gamma_i$ is also independent of $\la_i$, and the proposition follows from Lemma~\ref{lrc}.
\end{proof}

\begin{theorem}
\label{thr} The only class of the St\"ackel systems associated with the
separation relations \eqref{Stack} which satisfies the Robertson condition
\eqref{rc} is, up to a natural equivalence, the Benenti class \eqref{ben}.\looseness=-1
\end{theorem}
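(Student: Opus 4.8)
The plan is to feed the necessary condition of Lemma~\ref{lem} into an explicit change of variables that collapses the resulting one-parameter family of admissible exponent sequences onto the Benenti sequence. First I would use Lemma~\ref{lrc} to recast \eqref{rc} as the requirement that each $\gamma_i$ be $\la_i$-independent, and then apply Lemma~\ref{lem}: any class associated with \eqref{Stack} satisfying the Robertson condition must have exponents obeying $\delta_k=(n-k)\delta_{n-1}$ for $k=1,\dots,n-1$. Setting $m:=\delta_{n-1}$, which is a positive integer by the ordering \eqref{Stack1}, this reads $\delta_k=(n-k)m$; thus the admissible sequences form a one-parameter family indexed by $m\in\Z_{>0}$, namely the Benenti exponents $n-r$ multiplied by $m$. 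The separation relations \eqref{Stack} then take the form $\sum_{r=1}^n H_r\la_i^{(n-r)m}=f_i(\la_i)\mu_i^2+\sigma_i(\la_i)$.

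Second, I would exhibit the natural equivalence as the diagonal point transformation $\tilde\la_i=\la_i^m$ on $\mathcal{Q}$ (a local diffeomorphism on each factor, restricted to a suitable open set where $\la_i\neq 0$), lifted canonically to $T^*\mathcal{Q}$ by $\mu_i=m\la_i^{m-1}\tilde\mu_i$ so that the tautological one-form is preserved, $\mu_i\,d\la_i=\tilde\mu_i\,d\tilde\la_i$. Under this substitution one has $\la_i^{(n-r)m}=\tilde\la_i^{\,n-r}$ and $\mu_i^2=m^2\tilde\la_i^{\,2-2/m}\tilde\mu_i^2$, so the relations become $\sum_{r=1}^n H_r\tilde\la_i^{\,n-r}=\tilde f_i(\tilde\la_i)\tilde\mu_i^2+\tilde\sigma_i(\tilde\la_i)$ with $\tilde f_i(\tilde\la_i):=m^2\tilde\la_i^{\,2-2/m}f_i(\tilde\la_i^{1/m})$ and $\tilde\sigma_i(\tilde\la_i):=\sigma_i(\tilde\la_i^{1/m})$. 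Since $\tilde f_i$ and $\tilde\sigma_i$ are again functions of the single variable $\tilde\la_i$, the transformed system is exactly of Benenti type \eqref{ben}; crucially the Hamiltonians $H_r$ on the left-hand side are untouched, so the Killing--St\"ackel algebra and the whole set of integrals are carried over intact.

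Finally, I would note that, being the canonical lift of a diagonal point transformation, this map sends orthogonal separable coordinates to orthogonal separable coordinates and off-diagonal tensor components to off-diagonal components, and hence preserves the condition $R_{ij}=0$ ($i\neq j$). Combined with the preceding Proposition, which guarantees that the Benenti class itself satisfies \eqref{rc}, this shows that every class obeying the Robertson condition is, via the transformation $\la_i\mapsto\la_i^m$, equivalent to \eqref{ben}, which is the assertion of the theorem.

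I expect the main obstacle to be pinning down the notion of natural equivalence and checking that $\la_i\mapsto\la_i^m$ really qualifies: one must verify that $\tilde f_i$ and $\tilde\sigma_i$ stay admissible single-variable functions, that the lift is genuinely canonical, and that the transformation is well defined away from $\la_i=0$ (forcing the restriction to an open subset, and a choice of branch for even $m$). The algebraic heart of the argument --- forcing the exponents into the form $(n-k)m$ --- is already supplied by Lemma~\ref{lem}, so what remains is essentially the bookkeeping of this canonical change of variables and the observation that it leaves the $H_r$ fixed.
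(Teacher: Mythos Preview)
Your proposal is correct and follows essentially the same route as the paper: invoke Lemma~\ref{lem} to force $\delta_k=(n-k)\delta_{n-1}$, then apply the canonical point transformation $\tilde\la_i=\la_i^{\delta_{n-1}}$, $\tilde\mu_i=\delta_{n-1}^{-1}\la_i^{1-\delta_{n-1}}\mu_i$ to reduce the separation relations to the Benenti form~\eqref{ben}. You supply more detail than the paper (the explicit $\tilde f_i$, $\tilde\sigma_i$, the canonicity check, and the remark on preservation of $R_{ij}=0$), but the argument is the same.
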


\begin{proof}
By Lemma~\ref{lem} the Robertson condition can hold only if
\eqref{nc} is valid. However, this case of the St\"ackel systems related to  \eqref{Stack} is
equivalent to the Benenti case \eqref{ben} via the canonical transformation
\eqq{
\lambda_i\map  \lambda_i^{\delta_{n-1}}\qquad
\mu_i\map \frac{1}{\delta_{n-1}} \lambda_i^{1 - \delta_{n-1}} \mu_i.
}
As a result, the separation relations \eqref{Stack} with $\delta_i$ given by \eqref{rel}
take the form \eqref{ben} in the new separation coordinates.
\end{proof}

The Benenti class contains many known separable systems from classical
mechanics, all of which are also separable in the quantum case, as it was
established earlier in this section. In particular, for the Euclidean case
there exists an infinite family of potentials separable in generalized
elliptic coordinates \cite{AW}, containing the well-known Garnier system
\cite{G} (see the example from the subsequent section). Next, there is an
infinite family of potentials separable in generalized parabolic coordinates
\cite{Stefan}. For the pseudo-Euclidean case we have an infinite family of
separable potentials considered in \cite{BS} which include \emph{inter alia}
stationary flows of the coupled Korteweg--de Vries and coupled Harry Dym
soliton systems \cite{BM1, BM2}. Finally, for the constant curvature case,
there exists another infinite family of potentials separable in generalized
spherical-conical coordinates, which contains, in particular, the
Neumann--Rosochatius potential \cite{NR}.

It is also important to note that all other classes of St\"ackel systems
considered in the present paper, for which we proved that quantum
integrability does not survive, are not independent from Benenti class. In
fact, all remaining classes of St\"ackel systems (\ref{Stack}) are related to
the Benenti class by multi-parameter generalized St\"ackel transforms at the
level of Hamiltonians and by the so-called reciprocal transformations at the
level of equations of motion \cite{bla2,bla3,S12}.

\section{Example}

As we have noted earlier, the Benenti class contains many separable systems
known from classical mechanics. Here we illustrate our results on a simple
example of a system with two degrees of freedom, namely, the two-dimensional
Garnier system \cite{G}. In the Euclidean coordinates $(x_{1},x_{2})$ on
$\mathbb{R}^{2}$ the Hamiltonian $H$ and the second constant of motion $F$
read
\[
H=\frac{1}{2}p_{1}^{2}+\frac{1}{2}p_{2}^{2}+\frac{1}{16}(x_{1}^{2}+x_{2}%
^{2})^{2}-\frac{1}{4}(\beta_{1}x_{1}^{2}+\beta_{2}x_{2}^{2})-\beta_{1}%
\beta_{2},
\]
\[
F=\frac{1}{2}(-\beta_{2}+\frac{1}{4}x_{2}^{2})p_{1}^{2}+\frac{1}{2}(-\beta
_{1}+\frac{1}{4}x_{1}^{2})p_{2}^{2}-\frac{1}{4}x_{1}x_{2}p_{1}p_{2}- \frac
{1}{16}(x_{1}^{2}+x_{2}^{2})(\beta_{2}x_{1}^{2}+\beta_{1}x_{2}^{2})+\frac
{1}{4}\beta_{1}\beta_{2}(x_{1}^{2}+x_{2}^{2}),
\]
where $\beta_{1}\neq\beta_{2}\in\mathbb{R}$. The Garnier potential is the
simplest nontrivial potential that separates in generalized elliptic
coordinates \cite{AW}. In our case the separation coordinates $(\lambda
_{1},\lambda_{2})$ are elliptic coordinates related to the Euclidean ones by
the formulas
\[
x_{1}^{2}=4\frac{(\beta_{1}-\lambda_{1})(\beta_{1}-\lambda_{2})}{(\beta
_{1}-\beta_{2})},\;\;\;x_{2}^{2}=4\frac{(\beta_{2}-\lambda_{1})(\beta
_{2}-\lambda_{2})}{(\beta_{2}-\beta_{1})}.
\]
The separation relations are given by two copies of the separation curve
\[
H\lambda+F=-\frac{1}{2}(\lambda-\beta_{1})(\lambda-\beta_{2})\mu^{2}%
-(\beta_{1}+\beta_{2})\lambda^{2}+\lambda^{3},\;\;\;\lambda=\lambda
_{1},\lambda_{2}.
\]
In the separation coordinates we have
\[
H=-\frac{1}{2}\frac{(\lambda_{1}-\beta_{1})(\lambda_{1}-\beta_{2})}%
{\lambda_{1}-\lambda_{2}}\mu_{1}^{2} -\frac{1}{2}\frac{(\lambda_{2}-\beta
_{1})(\lambda_{2}-\beta_{2})}{\lambda_{2}-\lambda_{1}}\mu_{2}^{2}+(\beta
_{1}+\beta_{2})(\lambda_{1}+\lambda_{2}) +\lambda_{1}\lambda_{2}-(\lambda
_{1}+\lambda_{2})^{2},
\]
\[
F=\frac{1}{2}\frac{\lambda_{2}(\lambda_{1}-\beta_{1})(\lambda_{1}-\beta_{2}%
)}{\lambda_{1}-\lambda_{2}}\mu_{1}^{2} +\frac{1}{2}\frac{\lambda_{1}%
(\lambda_{2}-\beta_{1})(\lambda_{2}-\beta_{2})}{\lambda_{2}-\lambda_{1}}%
\mu_{2}^{2}-(\beta_{1}+\beta_{2})\lambda_{1}\lambda_{2} +\lambda_{1}%
\lambda_{2}(\lambda_{1}+\lambda_{2}),
\]
and hence, according to (\ref{qhk}), the associated quantum operators in these
coordinates read
\begin{align*}
\widehat{H}= & \frac{1}{2}\hslash^{2}\biggl[ \frac{(\lambda_{1}-\beta
_{1})(\lambda_{1}-\beta_{2})}{\lambda_{1}-\lambda_{2}}\frac{\partial^{2}%
}{\partial\lambda_{1}^{2}} +\frac{(\lambda_{2}-\beta_{1})(\lambda_{2}%
-\beta_{2})}{\lambda_{2}-\lambda_{1}}\frac{\partial^{2}}{\partial\lambda
_{2}^{2}}\\
& +\frac{\lambda_{1}-\frac{1}{2}(\beta_{1}+\beta_{2})}{\lambda_{1}-\lambda
_{2}}\frac{\partial}{\partial\lambda_{1}}+\frac{\lambda_{2}-\frac{1}{2}%
(\beta_{1}+\beta_{2})}{\lambda_{2}-\lambda_{1}}\frac{\partial}{\partial
\lambda_{2}}\biggr]+(\beta_{1}+\beta_{2})(\lambda_{1}+\lambda_{2})
+\lambda_{1}\lambda_{2}-(\lambda_{1}+\lambda_{2})^{2},
\end{align*}
\begin{align*}
\widehat{F}= & -\frac{1}{2}\hslash^{2}\biggl[ \frac{\lambda_{2}(\lambda
_{1}-\beta_{1})(\lambda_{1}-\beta_{2})}{\lambda_{1}-\lambda_{2}}\frac
{\partial^{2}}{\partial\lambda_{1}^{2}} +\frac{\lambda_{1}(\lambda_{2}%
-\beta_{1})(\lambda_{2}-\beta_{2})}{\lambda_{2}-\lambda_{1}}\frac{\partial
^{2}}{\partial\lambda_{2}^{2}}\\
& +\frac{\lambda_{1}\lambda_{2}-\frac{1}{2}(\beta_{1}+\beta_{2})\lambda_{2}%
}{\lambda_{1}-\lambda_{2}}\frac{\partial}{\partial\lambda_{1}}+\frac
{\lambda_{1}\lambda_{2}-\frac{1}{2}(\beta_{1}+\beta_{2})\lambda_{1}}%
{\lambda_{2}-\lambda_{1}}\frac{\partial}{\partial\lambda_{2}}\biggr]-(\beta
_{1}+\beta_{2})\lambda_{1}\lambda_{2} +\lambda_{1}\lambda_{2}(\lambda
_{1}+\lambda_{2}).
\end{align*}
One can readily check that $[\widehat{H},\widehat{F}]=0$. The joint eigenvalue
problem for $\widehat{H}$ and $\widehat{F}$
\begin{align}
\label{1} & \widehat{H}\Psi(\lambda_{1},\lambda_{2})=E\Psi(\lambda_{1}%
,\lambda_{2})\\
& \widehat{F}\Psi(\lambda_{1},\lambda_{2})=\overline{E}\Psi(\lambda
_{1},\lambda_{2})\label{2}%
\end{align}
separates into two copies of the following one-dimensional eigenvalue
problem:
\[
\frac{1}{2}\hslash^{2}(\lambda-\beta_{1})(\lambda-\beta_{2})\psi
^{^{\prime\prime}}(\lambda)+\frac{1}{2}\hslash^{2}[\lambda-\frac{1}{2}%
(\beta_{1}+\beta_{2})]\psi^{^{\prime}}(\lambda) -[\lambda^{3}-(\beta_{1}%
+\beta_{2})\lambda^{2}+E\lambda+\overline{E}]\psi(\lambda)=0.
\]
Here $\lambda=\lambda_{1},\lambda_{2}$.
%; if $\lambda=\lambda_i$ then $\psi=\psi(\lambda_i)$.
The joint eigenfunction of $\widehat{H}$ and $\widehat{F}$ has the form
$\Psi(\lambda_{1},\lambda_{2})=\psi(\lambda_{1})\psi(\lambda_{2})$, and the
separation parameters in (\ref{1}) and (\ref{2}) are $\overline{E}$ and $E$ respectively.

\section{On other admissible quantizations}

As far as the quantization procedure is concerned, at the mathematical level
of the theory there are many admissible quantizations leading to different
forms of Hamiltonian operators. Apparently there is no way of telling from the
first principles which one is appropriate; this can be verified through
experiment only.

On the other hand, the number of known physical quantum systems with finitely
many degrees of freedom being counterparts of some classical systems is very
limited. These systems are mostly described by the so-called natural
Hamiltonians with flat metrics \eqref{nh}. This \emph{per se} does not suffice
to fix uniquely the quantization and thus leads to ambiguities. Thus, one
encounters in the literature various quantizations which coincide for the
class of natural Hamiltonians with flat metrics.

The choice made in the present paper for the quantum version \eqref{qhk} of a
classical constant of motion \eqref{hk}, also made in \cite{Ben3,Ben4,Du}, is
called a \emph{minimal} quantization \cite{Du,bla4}. In the literature one can
also find other quantizations of Hamiltonians \eqref{hk} (see \cite{bla4} and
references therein) of a general form
\begin{equation}
\hat{H} = -\frac{\hbar^{2}}{2} \biggl(\nabla_{i} K^{ij} \nabla_{j} + \frac
{1}{4} K^{ij}_{\phantom{ij};ij} - \frac{1}{4}\beta K^{ij} R_{ij} \biggr) +
V,\label{eq:a}%
\end{equation}
where $\beta\in\mathbb{R}$, $R_{ij}$ is the Ricci tensor and the subscript
preceded by semicolon indicates the covariant derivative in the appropriate
direction (e.g. $;k$ stands for the covariant derivative in the direction of
the vector field $\partial_{k}$). When $K^{ij}$ is just the metric tensor
$g^{ij}$, the above formula boils down to
\begin{equation}
\hat{H} = -\frac{\hbar^{2}}{2} \left( g^{ij} \nabla_{i} \nabla_{j} - \frac
{1}{4}\beta R \right)  + V,\label{eq:5.2}%
\end{equation}
where $R$ is the scalar curvature. From \eqref{eq:a} it is obvious that the
quantization procedure which is not a minimal one generates an extra potential
in the quantum Hamiltonian. This potential causes some troubles as in general
it is not expressible through appropriate separable potentials.

Below we give a few comments on this class of admissible quantizations without
going into any details.

First, one can prove that for $f_{i}(\lambda_{i})=\lambda_{i}^{k}$ we have
\begin{equation}
K_{r;ij}^{ij}=\frac{1}{4}(n+1-r)V_{r-1}^{(k-1)},
\end{equation}
where $K_{r}$ is a contravariant Killing tensor of the Benenti class
\eqref{ben}, $k\in\mathbb{Z}$, and $V_{r-1}^{(k-1)}$ is a separable potential \eqref{Vk},\eqref{Fmat}.

For the flat case $0\leq k \leq n$ (cf.\ the preceding section) and
$V_{r}^{(k)}=\delta_{r,n-k}$ the Hamiltonian operators \eqref{eq:a} coincide
with those arising from the minimal quantization \eqref{qhk} up to a constant.
On the other hand, for the non-flat case the extra terms $R$ and $K^{ij}
R_{ij}$ are complicated functions of coordinates and cannot be expressed
through appropriate separable potentials, so both quantum separability and
quantum integrability are destroyed. The only exception is the case of
constant curvature. Then $k=n+1$, we have
\begin{equation}
K_{s}^{ij}R_{ij}=-\frac{1}{4}(n+1-s)(n-1)V_{s-1}^{(n)},\qquad V_{0}\equiv1,
\end{equation}
and the choice $\beta=-\frac{1}{n-1}$ ensures cancelation of the extra terms
in \eqref{eq:a}.

\section*{Acknowledgements}

The research of AS was supported in part by the Grant Agency of the Czech
Republic (GA \v CR) under grant P201/11/0356 and by the Ministry of Education,
Youth and Sport of the Czech Republic (M\v SMT \v CR) under RVO funding for
I\v{C}47813059. AS is also very pleased to thank Maciej B\l aszak and
B\l a\.zej Szablikowski and the Faculty of Physics as a whole for the warm
hospitality extended to him in the course of his visits to the Adam Mickiewicz University.\looseness=-1

\end{document}